\newcommand{\dist}{\mathrm{d}}
\newcommand{\NN}{\mathrm{NN}}
\newcommand{\e}{\varepsilon}
\newcommand{\R}{\mathbb{R}}
\newcommand{\Z}{\mathbb{Z}}
\renewcommand{\flat}[1]{\underline{#1}}
\newcommand{\vor}{\mathrm{Vor}}
\newcommand{\rad}{\mathrm{rad}}
\newcommand{\segment}{\mathrm{seg}}
\newcommand{\nbrhd}{\mathrm{Nbrhd}}
\newcommand{\binbrhd}{\mathrm{BiNbrhd}}
\newcommand{\mult}{\mathrm{mult}}
\newcommand{\spread}{\Delta}
\newcommand{\opt}{\mathrm{Opt}}
\newcommand{\cost}{\mathrm{cost}}
\newtheorem{theorem}{Theorem}[section]
\newtheorem{lemma}[theorem]{Lemma}
\newtheorem{corollary}[theorem]{Corollary}
\title{Sketching Persistence Diagrams}
\author{Donald R. Sheehy \and Siddharth Sheth}
\begin{document}
  \maketitle

  \begin{abstract}
  Given a persistence diagram with $n$ points, we give an algorithm that produces a sequence of $n$ persistence diagrams converging in bottleneck distance to the input diagram, the $i$th of which has $i$ distinct (weighted) points and is a $2$-approximation to the closest persistence diagram with that many distinct points.
  For each approximation, we precompute the optimal matching between the $i$th and the $(i+1)$st.
  Perhaps surprisingly, the entire sequence of diagrams as well as the sequence of matchings can be represented in $O(n)$ space.
  The main approach is to use a variation of the greedy permutation of the persistence diagram to give good Hausdorff approximations and assign weights to these subsets.
  We give a new algorithm to efficiently compute this permutation, despite the high implicit dimension of points in a persistence diagram due to the effect of the diagonal.
  The sketches are also structured to permit fast (linear time) approximations to the Hausdorff distance between diagrams---a lower bound on the bottleneck distance.
  For approximating the bottleneck distance, sketches can also be used to compute a linear-size neighborhood graph directly, obviating the need for geometric data structures used in state-of-the-art methods for bottleneck computation.
\end{abstract}

  \section{Introduction}
\label{sec:introduction}

\emph{Persistent homology} (PH) is a topological invariant with a built-in metric.
Thus, qualitative shape information (topology) becomes quantitative (distances).
This is why PH is so useful as a meta-analysis tool; it can map an entire data set to a single point in a metric space, i.e., a \emph{persistence diagram} (PD).
The complexity of computing the distance between PDs is determined by the complexity of the PDs themselves, which are multisets of pairs of numbers.
The exact distance is computed by finding the minimum bottleneck matching between the sets.
Naturally, smaller diagrams lead to faster computation.

In this paper, we will explore methods for sketching PDs, producing much smaller diagrams while maintaining some guaranteed proximity to the original PD.
Given a PD $D$ with $n$ distinct (nondiagonal) points, we will produce a sequence of PDs $D_0,\ldots,D_n$ where each $D_i$ has $i$ distinct points.
Let $\e_i$ be the bottleneck distance $\dist_B(D,D_i)$ for all $i$.
The sequence has the property that $\e_0\ge \e_1 \ge \cdots \ge \e_n = 0$.
In other words, the sequence converges to $D$ in the bottleneck distance.
Moreover, the triangle inequality then gives a guarantee that for any PD $X$ and all $i$
\[
  |\dist_B(X,D) - \dist_B(X, D_i)| \le \e_i.
\]

In addition to computing the sequence of diagrams, we will also compute the optimal matching $M_i:D_i\to D_{i+1}$.
Thus, given a matching $M:X\to D_i$, we will be able to compute a matching $M_i\circ M:X\to D_{i+1}$ whose bottleneck cost is only increased by at most $\e_{i+1}$.
We will show how this can be used to efficiently reuse much of the work in computing a bottleneck distance to $D_i$ when computing the distance to the finer approximation $D_{i+1}$.
Surprisingly, the total space required to represent the sequence of diagrams as well as the sequence of matchings will only be $O(n)$.
So, for a constant factor extra space, the PD can be stored in a way that allows for fast approximate distance computations.


Our approach is based on a greedy permutation (also known as a farthest-point traversal) of the points in the persistence diagram combined with a weighting scheme.
For a PD with $n$ points, we will produce $n$ different approximations, where the $i$th approximation has $i$ distinct weighted points.
During construction, we precompute both the bottleneck distance between the successive approximations, but also the matching that realizes this distance.
Thus, we have bounds on the error associated to any given approximation.

The main motivation for our work is to improve data analysis on spaces of persistence diagrams.
In many proximity search problems, it is often less important to compute distances than it is to guarantee that distances are larger than some bound in order to prune a search.
In simple metric spaces where distances are inexpensive to compute or are part of the input, this is usually accomplished by direct comparison.
For expensive to compute metrics such as the bottleneck distance, we would benefit from fast approximations.

  \section{Background}
\label{sec:background}

The extended plane is the set $\R_\infty^2 := (\R\cup \{\infty\})^2$.
Let $p = (p_b, p_d)$ be a point in $\R_\infty^2$.
The subscripts $b$ and $d$ on the coordinates refer to ``birth'' and ``death'' respectively.
The distance between points $p,q\in \R^2$ is defined using the $L_\infty$-norm
\[
  \|p-q\| = \max\{|p_b - q_b|, |p_d - q_d|\}.
\]
The distance between between points in $\R_\infty^2$ is defined similarly with the usual care required for arithmetic on $\infty$.

A persistence diagram (PD) is a multiset of points in $\R_\infty^2$ that contains the diagonal $\{(x,x) \mid x\in \R^\infty\}$ with infinite multiplicity.
For any multiset $X$, let $\flat{X}$ denote the underlying set, and for any $x\in \flat{X}$, let $\mult(x)$ be the multiplicity of $x$ in $X$.

Let $A$ and $B$ be PDs.
A bijection $M:A\to B$ is called a \emph{matching}.
The \emph{bottleneck cost} of $M$ is $\max_{a\in A} \|a - M(a)\|$.
The \emph{bottleneck distance} between PDs $A$ and $B$ is defined to be the minimum bottleneck cost over all matchings $M:A\to B$.

Every matching $M:A\to B$ induces a \emph{transportation plan}.
This is a function $T:\flat{A}\times \flat{B}\to\Z$ that counts the number of edges between a point $a\in \flat{A}$ and $b\in \flat{B}$.
More generally a function $T:\flat{A}\times \flat{B}\to\Z$ is a transportation plan between multisets $A$ and $B$ if for all $a\in \flat{A}$ and all $b\in \flat{B}$, we have
\[
  \sum_{b'\in \flat{B}}T(a,b') = \mult(a) \text{ and } \sum_{a'\in \flat{A}}T(a',b) = \mult(b).
\]
The bottleneck cost of a matching is easily computed from the transportation plan.
For this reason, it is not uncommon to see the bottleneck distance presented in terms of either.
Both matchings and transportation plans will be useful in this paper.
Matchings have the advantage that their composition is canonically defined.
Transportation plans have the advantage that they are simpler to represent and therefore reduce space usage.
Every transportation plan represents an equivalence class of matchings.

The bottleneck distance is a special case of the Wasserstein distance.
For a given matching $M$, the $p$-Wasserstein cost is
\[
  \cost_p(M) = \left(\sum_{x\in X} \dist(x, M(p))^p \right)^{1/p}.
\]
The bottleneck distance is the case of $p=\infty$.

\subsection{Greedy Permutations}

Given two subsets $A$ and $B$ in a metric space, the \emph{Hausdorff distance} between $A$ and $B$ is defined to be
\[
  \dist_H(A, B) := \max \{\sup_{a\in A}\inf_{b\in B} \dist(a,b), \sup_{b\in B}\inf_{a\in A} \dist(a,b)\}.
\]

Let $(X, \dist)$ be any metric space and let $P\subseteq X$ be finite.
Let $P$ be ordered $(p_0,\ldots, p_{n-1})$.
The $i$th \emph{prefix} of the ordering is denoted $P_i =\{p_0,\ldots, p_{i-1}\}$.
The ordering is a \emph{greedy permutation} if for all $i\in \{1,\ldots, n-1\}$,
\[
  \min_{q\in P_i}\dist(p_i, q) = \dist_H(P_i, P).
\]
In other words, each point $p_i$ is the farthest point from the prefix $P_i$.
For this reason, greedy permutations are also known as farthest point samples.
The point $p_0$ may be chosen arbitrarily.

For each $p_i$, the distance $\e_i = \dist(p_i, P_i)$ is called the \emph{insertion radius}.
By convention $\e_0 = \infty$.
By construction, for a greedy permutation, we have
\[
  \e_0\ge \e_1\ge \cdots \ge \e_{n-1}.
\]

The \emph{spread} $\spread$ of a point set is the ratio of the largest to smallest pairwise distances.
If the points are arranged in a greedy permutation, then the spread is at most $\frac{2\e_1}{\e_n}$.
We define the spread for a persistence diagram similarly, except that we ignore multiplicity and treat the entire diagonal as a single point.

  \section{Related Work}
\label{sec:relatedwork}

The state-of-the-art for computing the bottleneck distance between persistence diagrams is the work of Kerber et al.~\cite{kerber17geometry}.
They borrow the geometric insight from the work of Efrat et al.~\cite{efrat01geometry}, in which the Hopcroft-Karp matching algorithm~\cite{hopcroft73algorithm} is combined with geometric data structures to avoid constructing the entire bipartite graph.
With this approach Efrat et al.\ reduce the execution time of the matching algorithm from $O(n^{2.5})$ to $O(n^{1.5} \log n)$.
Kerber et al.\ use this idea to give a similar running time for computing the bottleneck distance persistence diagrams.

Kerber et al.\ also adapt Bertsekas' auction algorithm~\cite{bertsekas79distributed} to find an approximate Wasserstein distance between diagrams, including the adaptation by Bertsekas and Castanon~\cite{bertsekas89auction} that works with multiplicity.



Our approach is based on greedy permutations of the points as originally presented for clustering (see~\cite{gonzalez85clustering,dyer85simple}).
An efficient algorithm to compute such permutations comes from the work of  Clarkson~\cite{clarkson03nearest} and his data structures for nearest neighbor search.
Har-Peled and Mendel~\cite{harpeled06fast} showed that Clarkson's algorithm runs in $2^{O(d)}n\log(\spread)$ time and $O(2^{O(d)}n)$ space, where $d$ is the doubling dimension and $\spread$ is the spread.
Our work extends Clarkson's algorithm and uses it to compute greedy permutations of PDs.

There are many novel applications of the bottleneck and Wasserstein distance,  or its approximation.
Fasy et al.~\cite{fasy18approximate} consider the problem of approximating nearest neighbors of PDs.
Mumey~\cite{mumey18indexing} provides an approach to approximate nearest bottleneck distance queries over a finite point set by indexing a set of points to create a database and a \emph{trie}-based data structure.
Soler et al.~\cite{soler18lifted} track topological changes in time by finding matchings in a lifted Wasserstein metric.
Vidal et al.~\cite{vidal19progressive} provide an iterative algorithm to calculate progressively accurate approximations of the Wasserstein barycenters on a space of PDs.
All of these approaches could benefit from faster distance computation.

There are several previous approaches that transforming PDs into another representation to make certain tasks computationally simpler.
Bubenik~\cite{bubenik15statistical} introduces one such form called persistence landscape which enables statistical inference via standard statistical tests.
Adams et al.~\cite{adams17persistent} show that PDs can be converted to a vector representation called a persistence image.
Divol and Lacombe~\cite{divol20understanding} give a framework to study PDs by converting them to partial optimal transport problems and expressing them as Radon measures on the upper half plane.
Lacombe et al.~\cite{lacombe18large} describe a scalable framework to compute standard properties of PDs by reformulating them as optimal transport problems.

Additionally, there has been significant work in representing PDs as vectors leading to combination of machine learning theory with topological data analysis.
Carrière et al.~\cite{carriere15stable} define a stable topological point signature for shapes by extracting vectors from PDs.
Many machine learning techniques require the underlying space to be a Hilbert space.
It is shown by Reininghaus~\cite{reininghaus15stable} that it is possible to use topological information encapsulated in PDs in all kernel-based machine learning methods.
The kernel thus defined, however, performs poorly when encumbered with a large number of training vectors.
Zeppelzauer \cite{zeppelzauer16topological} attempts to overcome these limitations by analysing 3D surfaces using persistence image descriptors of Adams et al.~\cite{adams17persistent}.
Carrière et al.~\cite{carriere17sliced} implement a provably stable sliced Wasserstein kernel and an algorithm to approximate it.

  \section{From Hausdorff to Bottleneck Approximations}
\label{sec:hausdorff_to_bottleneck}

The main idea of this paper is to use greedy permutations to give approximations to a persistence diagram.
The greedy permutation is often used to give approximations that are close in Hausdorff distance.
If $A$ and $B$ are PDs, then it is possible that $\dist_H(\flat{A},\flat{B}) = 0$ and $\dist_B(A, B)$ is large.
Therefore, one should not, in general, expect that a good Hausdorff approximation will give a good bottleneck approximation.
The same holds for other Wasserstein metrics.

There is one important case in which we \emph{can} relate the Hausdorff distance and the Bottleneck distance---when one diagram is a subset of the other and the multiplicities of its points have been carefully adjusted.
This section gives the construction and proves the equivalence of the Hausdorff and Bottleneck distances for that case.

\paragraph*{Natural Reweighting}

A \emph{reweighting} of a PD $A$ is a new persistence diagram $A'$ formed by assigning new multiplicities to the points.
Thus, if $A$ is a reweighting of $A$, then $\flat{A} = \flat{A'}$.

Given $A\subseteq B$, the \emph{natural reweighting} of $A$ with respect to $B$ is one that assigns a multiplicity to each point $a$ in $\flat{A}$ according to the number of points in $B$ having $a$ as their nearest neighbor.
That is, for $a\in \flat{A}$, we have $\mult(a)$ is the number of points of $B$ that are closer to $a$ than to any other point of $\flat{A}$.
Ties can be broken arbitrarily, possibly resulting in non-uniqueness.

\begin{lemma}\label{lem:natural_reweighting}
  Let $A$ and $B$ be PDs with $A\subseteq B$.
  If $A'$ is a natural reweighting of $A$, then
  \[
    \dist_B(A', B) = \dist_H(\flat{A}, \flat{B}).
  \]
\end{lemma}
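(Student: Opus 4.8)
The plan is to prove the two inequalities $\dist_B(A',B) \le \dist_H(\flat{A},\flat{B})$ and $\dist_B(A',B) \ge \dist_H(\flat{A},\flat{B})$ separately, both exploiting the inclusion $A\subseteq B$. First observe that since $A\subseteq B$ and $B$ contains the diagonal with infinite multiplicity, so does $A$, hence the ``undersides'' of both diagrams match for free at cost $0$; I only need to track the finitely many off-diagonal points, with multiplicities. Also note $\dist_H(\flat{A},\flat{B}) = \sup_{b\in\flat{B}} \dist(b,\flat{A})$, since $\flat{A}\subseteq\flat{B}$ makes the other term in the Hausdorff max vanish. Write $r := \dist_H(\flat{A},\flat{B})$ for brevity.

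For the upper bound, I would build an explicit transportation plan of cost $\le r$. Assign to each $b\in\flat{B}$ the point $a(b)\in\flat{A}$ that was used to define the natural reweighting (the nearest neighbor of $b$ in $\flat{A}$, with the fixed tie-breaking), and set $T(a,b) := \#\{b'\in\flat{B} : b' = b,\ a(b') = a\}$ summed over multiplicity, i.e. route all of $b$'s mass to $a(b)$. Then for each $a\in\flat{A}$, the total mass routed to it is exactly the number of points of $B$ having $a$ as nearest neighbor, which is precisely $\mult_{A'}(a)$ by definition of the natural reweighting; and the mass leaving each $b$ is $\mult_B(b)$. So $T$ is a valid transportation plan between $A'$ and $B$. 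Every edge of the induced matching connects some $b$ to $a(b)$ with $\|b - a(b)\| = \dist(b,\flat{A}) \le r$, so the bottleneck cost is at most $r$, giving $\dist_B(A',B)\le r$.

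For the lower bound, let $b^\*\in\flat{B}$ realize (or approach, via a supremum argument) the Hausdorff distance, so $\dist(b^\*,\flat{A}) = r$. In any matching $M:A'\to B$, consider one copy of $b^\*$: it is matched to some point of $A'$, which lies in $\flat{A'} = \flat{A}$, and the matching cost of that edge is at least $\dist(b^\*,\flat{A}) = r$ unless the partner is $b^\*$ itself — but $b^\*\in\flat{B}$ need not lie in $\flat{A}$. The subtlety is that $M$ could match a copy of $b^\*$ (as an element of $A'$) only if $b^\*\in\flat{A}$, in which case $\dist(b^\*,\flat{A})=0\ne r$ unless $r=0$ (trivial case). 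So assuming $r>0$, every copy of $b^\*$ in $B$ is matched by $M$ to a point of $\flat{A}\setminus\{b^\*\}$, at cost $\ge r$; hence every matching has bottleneck cost $\ge r$, so $\dist_B(A',B)\ge r$.

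I expect the main obstacle to be purely bookkeeping rather than conceptual: carefully handling multiplicities and the diagonal so that the transportation plan's marginal constraints ($\sum_b T(a,b) = \mult_{A'}(a)$ and $\sum_a T(a,b)=\mult_B(b)$) hold exactly, and making sure the tie-breaking choice that defines the natural reweighting is the \emph{same} choice used to build the matching. A secondary point is handling points at infinity and the case where the Hausdorff supremum is not attained, which is dispatched by an $\e$-argument or by noting $\flat{B}$ is finite off the diagonal so the supremum is in fact a maximum.
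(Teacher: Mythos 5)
Your proposal is correct and follows essentially the same route as the paper: the upper bound uses exactly the paper's matching (route every point of $B$, with multiplicity, to its nearest neighbor in $\flat{A}$, which by definition of the natural reweighting respects the multiplicities of $A'$), and your lower bound via a point $b^*$ realizing $\max_{b\in\flat{B}}\dist(b,\flat{A})$ is just a spelled-out version of the paper's one-line optimality remark that every edge from a point of $B$ must go to some point of $\flat{A}$. The only difference is presentational — you handle the tie-breaking, the diagonal, and the $r=0$ case more explicitly than the paper does, which is fine.
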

\begin{proof}
  All points in $A\cap B$ will be matched to themselves.
  The points of $B\setminus A$ will be matched to their nearest neighbor.
  This exactly matches each point of $A'$ with the correct multiplicity.
  This is a matching whose bottleneck is $\max_{b\in B}\dist(b,A) = \dist_H(A,B)$.
  This matching must be optimal, because every edge from $b\in B$ is the globally shortest possible to a point in $A$.
\end{proof}

\paragraph*{Optimal Transport}

The natural reweighting can be understood in terms of optimal transport.
Specifically, for $A\subseteq B$, the natural reweighting $A'$ corresponds to the transportation plan that moves every point $b$ of $B$ to its nearest point $\NN_{\flat{A}}(b)$ in $\flat{A}$.
Formally, the transportation plan is
\[
  T(a,b) := \begin{cases}
    \mult(b) & \text{if } a = \NN_{\flat{A}}(b)\\
    0 & \text{otherwise.}
  \end{cases}
\]

The transportation plan $T$ is optimal for any Wasserstein metric.
Any other transportation plan would suffer increased cost for each point of $B$ that is not moved to its nearest neighbor.


  \section{Greedy Permutations of PDs}
\label{sec:construction}

Given a PD, the diagonal and the multiplicity of points makes it impossible to give a greedy permutation directly.
With a slight adjustment, we can define a greedy permutation of the nondiagonal points of a PD.
Let $D$ be a PD and let the nondiagonal points of $\flat{D}$ be ordered $p_0,\ldots, p_{n-1}$.
The $i$th approximate diagram $D_i$ is the natural reweighting of the $i$th prefix of the ordering with the diagonal added to make it a PD.
So, $D_0$ is the empty diagram consisting of just the diagonal and $D_n$ is $D$.
The ordering is greedy if for all $i \in \{0,\ldots,n-1\}$,
\[
  \min_{q\in D_i} \|p_i - q\| = \dist_H(\flat{D_i}, \flat{D}).
\]
The sequence $(D_0,\ldots, D_n)$ is called a \emph{greedy PD sketch} of $D$.

By always starting with the diagonal, the choice of $p_0$ is not arbitrary as is the case of greedy permutations in other metrics.
The permutation will always start with $p_0$ as a point of maximum persistence.
Also, it is not relevant to consider multiplicities when finding greedy permutations of PDs.
Once all the distinct points have been added, the Hausdorff distance will be zero.

Because $D_i$ is the natural reweighting with respect to $D$, the result is a sequence of bottleneck approximations to $D$.
Up to a factor of two, these approximations are optimal for their size in the following sense.

\begin{theorem}
  Let $D$ be a persistence diagram and let $(D_0,\ldots, D_n)$ be a greedy PD sketch of $D$.
  For all $i$, let $\opt_i$ be the PD with $i$ distinct points that minimizes $\dist_B(D, \opt_i)$.
  Then, for all $i$, the approximation $D_i$ satisfies
  \[
    \dist_B(D, D_i) \le 2 \dist_B(D, \opt_i).
  \]
\end{theorem}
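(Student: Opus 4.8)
The plan is to reduce the statement to a fact about the insertion radii of the greedy permutation and then run the standard $2$-approximation argument for greedy $k$-center, while taking care of the diagonal. Write $\e_i := \dist_H(\flat{D_i},\flat{D})$; by the definition of a greedy PD sketch this equals $\dist(p_i,\flat{D_i})$, the insertion radius of $p_i$, and by Lemma~\ref{lem:natural_reweighting} (applied to $\flat{D_i}\subseteq\flat{D}$, where $\flat{D_i}$ is the prefix $\{p_0,\dots,p_{i-1}\}$ together with the diagonal) we have $\dist_B(D,D_i)=\e_i$. So it suffices to prove $\e_i\le 2\dist_B(D,\opt_i)$. I will use two easy monotonicity facts: the insertion radii are non-increasing, since $\flat{D_j}\subseteq\flat{D_k}$ for $j\le k$ and $p_k\in\flat{D}$; and each prefix point satisfies $\dist(p_j,\text{diagonal})\ge\dist(p_j,\flat{D_j})=\e_j$, because the diagonal lies in $\flat{D_j}$.

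Now look at the $i+1$ nondiagonal points $p_0,\dots,p_i$ (the case $i=n$ is trivial since then $\dist_B(D,D_n)=0$, and $i=0$ is trivial since $D_0=\opt_0$). For $j<k\le i$, since $p_j\in\flat{D_k}$ we get $\|p_j-p_k\|\ge\dist(p_k,\flat{D_k})=\e_k\ge\e_i$, and by the second fact above every such $p_j$ lies at distance $\ge\e_i$ from the diagonal. Let $\delta:=\dist_B(D,\opt_i)$ and fix a matching $M:D\to\opt_i$ of bottleneck cost $\delta$. Since $\opt_i$ has at most $i$ distinct nondiagonal points while $M(p_0),\dots,M(p_i)$ are $i+1$ distinct elements of the multiset $\opt_i$, either some $M(p_j)$ lies on the diagonal, giving $\e_i\le\dist(p_j,\text{diagonal})\le\|p_j-M(p_j)\|\le\delta$, or two of them, $M(p_j)$ and $M(p_k)$ with $j\ne k$, share a common nondiagonal underlying point $o$, giving $\e_i\le\|p_j-p_k\|\le\|p_j-o\|+\|o-p_k\|\le 2\delta$ by the triangle inequality. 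In both cases $\e_i\le 2\delta$, as desired.

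I expect the only real subtlety --- the point where this departs from the textbook $k$-center proof --- is the bookkeeping around the diagonal: one must use that greedy PD points are not only pairwise far apart but also far from the diagonal, so that $\opt_i$ cannot gain by matching a greedy point cheaply to the diagonal; and one must remember that ``$i$ distinct points'' counts nondiagonal points, so the pigeonhole is $i+1$ images landing in $i$ off-diagonal slots plus the diagonal, the latter being treated as a single extra case. Everything else is routine.
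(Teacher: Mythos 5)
Your proof is correct and follows essentially the same route as the paper: reduce $\dist_B(D,D_i)$ to the Hausdorff quantity $\e_i$ via Lemma~\ref{lem:natural_reweighting}, then run the standard Gonzalez-style $2$-approximation argument using that the greedy prefix points are pairwise (and from the diagonal) at least $\e_i$ apart, together with the triangle inequality through the optimal matching to $\opt_i$. The only difference is packaging: you pigeonhole the $i+1$ points $p_0,\dots,p_i$ into the $i$ off-diagonal points of $\opt_i$ plus the diagonal, which has the minor benefit of explicitly covering the case where a prefix point is matched to the diagonal, a case the paper's two-case analysis treats only implicitly.
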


\begin{proof}
  The proof follows the same pattern as the standard proof that greedy permutations yield $2$-approximate $k$-centers in metric spaces~\cite{gonzalez85clustering,dyer85simple}.
  Let $r = \dist_B(D, \opt_i)$.
  Any point of $D$ paired with the diagonal in the optimal matching with $\opt_i$ can also be matched with the diagonal in the matching with $D_i$.
  So, it will suffice to consider points of $D$ matched to nondiagonal points.
  By construction, the distance between any pair of nondiagonal points in $D_i$ is at least $\min_{q\in D_i}\|p_i - q\| = \dist_H(\flat{D_i}, \flat{D})$.
  So, if any two points of $D_i$ are matched to the same nondiagonal point $q$ in $\opt_i$, then, by the triangle inequality,
  \[
    \dist_H(D_i, D) \le 2r.
  \]
  So, we may assume that the bottleneck matching that realizes $\dist_B(D, \opt_i)$ matches each point of $D_i$ with a unique point of $\opt_i$.
  Using the triangle inequality, every point of $D$ is within $2r$ of a point of $\opt_i$.
  Therefore, we have shown that $\dist_H(\flat{D_i}, \flat{D})\le 2r$, and so, by Lemma~\ref{lem:natural_reweighting}, it follows that $\dist_B(D_i, D) \le 2r$.
\end{proof}

The preceding theorem shows the sense in which the PD sketch is approximately optimal for its size.
When used as a proxy for a full diagram, there is a clear upper bound on the error; using $D_i$ instead of $D$ introduces at most $\e_i$ error as shown in the following lemma.

\begin{lemma}\label{lem:error_bound}
  Let $X$ and $D$ be persistence diagrams and let $i$ be a nonnegative integer.
  Let $D_i$ be the $i$th PD in a greedy PD sketch of $D$ and let $\e_i=\dist_H(\flat{D_i}, \flat{D})$.
  Then,
  \[
    |\dist_B(X,D) - \dist_B(X, D_i)| \le \e_i.
  \]
\end{lemma}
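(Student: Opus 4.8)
The statement is a standard triangle-inequality bound: using $D_i$ as a proxy for $D$ perturbs any bottleneck distance by at most the bottleneck distance between $D$ and $D_i$. So the plan is to first establish that $\dist_B(D, D_i) \le \e_i$, and then invoke the triangle inequality on the metric $\dist_B$ in both directions.

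For the first step, recall that $D_i$ is, by definition, the natural reweighting (with respect to $D$) of the $i$th prefix $D_i \cap D$ of the greedy ordering, with the diagonal added. Since the prefix is a subset of $D$, Lemma~\ref{lem:natural_reweighting} applies directly and gives
\[
  \dist_B(D, D_i) = \dist_H(\flat{D_i}, \flat{D}) = \e_i.
\]
(One should note that the lemma is stated with $A \subseteq B$ where $A$ is reweighted; here $A$ is the $i$th prefix together with the diagonal and $B = D$, and the diagonal is common to both with infinite multiplicity, so it plays no role in either the Hausdorff distance or the matching.) This step is essentially a restatement of the construction plus the earlier lemma.

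For the second step, since $\dist_B$ is a metric on persistence diagrams, the triangle inequality gives both
\[
  \dist_B(X, D) \le \dist_B(X, D_i) + \dist_B(D_i, D)
\]
and
\[
  \dist_B(X, D_i) \le \dist_B(X, D) + \dist_B(D, D_i).
\]
Substituting $\dist_B(D, D_i) = \e_i$ from the first step and rearranging both inequalities yields $|\dist_B(X,D) - \dist_B(X,D_i)| \le \e_i$, as claimed.

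I do not anticipate a genuine obstacle here; the only point requiring care is the bookkeeping around the diagonal when invoking Lemma~\ref{lem:natural_reweighting}, since that lemma was phrased for PDs $A \subseteq B$ and we must make sure the "$i$th prefix plus diagonal" diagram is literally of that form relative to $D$ and that its natural reweighting is exactly $D_i$ — but this is immediate from the definition of a greedy PD sketch given just above the theorem. The use of the metric axioms for $\dist_B$ is standard (bottleneck distance is known to be a genuine metric on persistence diagrams), so no further justification is needed.
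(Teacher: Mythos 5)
Your proposal is correct and follows essentially the same argument as the paper: apply Lemma~\ref{lem:natural_reweighting} to conclude $\dist_B(D,D_i)=\e_i$, then use the triangle inequality for the bottleneck distance in both directions. The extra care you take about the diagonal when invoking the reweighting lemma is fine but not a departure from the paper's reasoning.
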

\begin{proof}
  Because $D_i$ is the natural reweighting of a subset of $D$, Lemma~\ref{lem:natural_reweighting} implies that $\dist_B(D, D_i) = \dist_H(\flat{D}, \flat{D_i}) = \e_i$.
  The desired inequality, then follows from the triangle inequality for the bottleneck distance.
\end{proof}

\subsection{Transportation Plans}

In addition to the approximate PDs, we also want to compute a matching or a transportation plan that take us from $D_i$ to $D_{i+1}$.
The difference between these PDs is
\begin{itemize}
  \item the addition of point $p_i$,
  \item some mass moves from points in $D_i$ to $p_i$, and
  \item some mass moves from the diagonal to $p_i$.
\end{itemize}
By the definition of the natural reweighting, the movement of mass is just a count of how many points of $D$ have $p_i$ as their nearest neighbor.
By the triangle inequality in the plane, any such transportation plan must be optimal as the source and target of every unit of mass is known, the straight lines have minimal cost.

The natural reweighting of $D_{i+1}$ is with respect to $D$ rather than $D_i$.
As a result, the bottleneck distance between $D_i$ and $D_{i+1}$ may be larger than the bottleneck distance between $D_i$ and $D$.
Specifically, we have
\[
  \e_i = \dist_H(\flat{D_i}, \flat{D_{i+1}}) \le \dist_B(D_i, D_{i+1}) \le \e_i + \e_{i+1}.
\]

The PD sketch for a PD with $n$ points contains $n+1$ PDs as well as $n$ transportation plans.
With a little care, the entire size to represent all of this is still only $O(n)$ as shown in the following theorem.

\begin{theorem}
  Given a persistence diagram $D$ with $n$ points, the greedy PD sketch and the optimal transportation plans can be represented in $O(n)$ space.
\end{theorem}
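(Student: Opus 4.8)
The plan is to reduce the theorem to a single structural claim: at every step $i$, the optimal transportation plan $T_i$ from $D_i$ to $D_{i+1}$ moves mass into the new point $p_i$ from only $O(1)$ distinct points of $\flat{D_i}$, together with (at most) the diagonal. Granting this, I would organize the representation as follows. Store the greedy permutation $p_0,\ldots,p_{n-1}$ and the multiplicities of $D$ itself (this is $O(n)$), and store the empty diagram $D_0$. For each $i$, store a short record of the transition $D_i\to D_{i+1}$: the index $i$ of the added point $p_i$, the multiplicity $p_i$ receives in $D_{i+1}$, the $O(1)$ indices $j$ of points of $\flat{D_i}$ that give up mass to $p_i$ together with the amounts, and one number for the mass moved off the diagonal. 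Since $D_{i+1}$ agrees with $D_i$ except at these $O(1)$ entries, this record simultaneously encodes $T_i$ and lets us reconstruct $D_{i+1}$ from $D_i$; replaying the records recovers the whole sketch and all the plans. The total size is $O(n) + \sum_i O(1) = O(n)$.

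First I would nail down the structural claim, which is where essentially all the work lies. Fix $i$ and suppose $T_i$ moves a positive amount from $p_j\in\flat{D_i}$ (so $j<i$) into $p_i$. Then some point $p_k$ of $D$ is assigned to $p_j$ in the nearest-neighbor assignment underlying $D_i$ and to $p_i$ in the one underlying $D_{i+1}$. Hence $\|p_k-p_j\| = \dist(p_k,\flat{D_i}) \le \dist_H(\flat{D_i},\flat{D}) = \e_i$, and since $\flat{D_{i+1}} = \flat{D_i}\cup\{p_i\}$ we get $\|p_k-p_i\| = \dist(p_k,\flat{D_{i+1}}) \le \|p_k-p_j\| \le \e_i$; the triangle inequality then gives $\|p_j-p_i\|\le 2\e_i$. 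On the other hand, the greedy property gives $\e_i = \min_{q\in D_i}\|p_i-q\| \le \|p_i-p_j\|$, and combining this with $\e_0\ge\e_1\ge\cdots$ shows that every two points among $p_0,\ldots,p_i$ are at distance at least $\e_i$. So the sources of $T_i$ lie in the $L_\infty$-ball of radius $2\e_i$ about $p_i$ and are pairwise $\e_i$-separated; a routine volume/packing count in $(\R^2,L_\infty)$ bounds their number by a constant. Together with the single bookkeeping entry for mass coming off the diagonal, this proves the claim.

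The hard part is precisely this packing estimate, and the subtlety worth flagging is why the high implicit dimension of a persistence diagram caused by the diagonal does not spoil it: the packing is performed only among genuine points of $D$ in the plane, whose doubling dimension is a small constant; the diagonal enters a transportation plan only as one catch-all source, contributing a fixed-size entry per step rather than a growing one. I would also remark that nothing changes if the points of $D$ carry multiplicities larger than one --- the argument goes through verbatim with ``amount of mass transported'' replacing ``number of points,'' since multiplicities affect only the magnitudes moved, not the set of sources. Finally I would double-check that the boundary conventions ($\e_0$, the empty diagram $D_0$, the arbitrary tie-breaking in nearest-neighbor assignments) each contribute only $O(1)$ and do not invalidate the monotonicity $\e_0\ge\e_1\ge\cdots$ used above; these checks are routine.
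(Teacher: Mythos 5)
Your proposal is correct and follows essentially the same argument as the paper: represent the sketch by the greedy permutation plus per-step transportation records, observe that every point shedding mass to $p_i$ lies within $2\e_i$ of $p_i$ while the inserted points are pairwise $\e_i$-separated, and conclude by a packing bound (the paper makes the constant explicit, at most $25$ neighbors including the diagonal, via squares of side $\e_i$ inside a square of side $5\e_i$). No substantive differences worth noting.
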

\begin{proof}
  Without the multiplicities, the sequence of diagrams is represented by the greedy permutation of the points.
  The multiplicities are encoded in the transportation plans from $D_i$ to $D_{i+1}$.
  In such a transportation plan, all the mass that moves is shifted to the newly added point $p_i$.
  Say that a point $q\in D_i$ is a neighbor of $p_i$ if it is one of the points that shifts some of its mass to $p_i$.
  That is, there is some point $x\in D$ such that $NN_{D_i}(x) = q$ and $NN_{D_{i+1}}(x) = p_i$.
  By the greedy ordering,
  \[
    \|x- p_j\| < \|x- q\| \le \e_i.
  \]
  Moreover, the greedy permutation guarantees that the neighbors of $p_i$ are all $\e_i$-separated.
  So, the squares of side-length $\e_i$ centered at the neighbors of $p_i$ are all disjoint and contained in the square of side length $5\e_i$ centered at $p_i$.
  It follows that there are at most $25$ neighbors including the diagonal.
  Thus, the $i$th transportation plan can be represented by a list of at most $25$ neighbors along with an integer for each counting the amount of mass moved.
  In total this is $O(n)$ numbers to store.
\end{proof}

  \section{Modifying Clarkson Algorithm for Greedy Permutations of Persistence Diagrams.}
\label{sec:clarkson}

It is possible to compute the greedy permutation of a persistence diagram using the standard quadratic time algorithm~\cite{gonzalez85clustering,dyer85simple}.
The naive approach is simply to treat the entire diagonal as the first point in the permutation.
To get a faster algorithm, a simple and effective approach due to Clarkson can compute a greedy permutation in $O(n\log \spread)$ time in low-dimensional metrics.
However, the inclusion of the diagonal in a persistence diagram is an obstacle to direct application of Clarkson's algorithm.
The reason is that treating the diagonal as a point breaks the triangle inequality, e.g., consider two points that are both close to the diagonal but far from each other (see Fig.~\ref{fig:triangle_ineq_fail}).
If one enforced the triangle inequality, the impact would be that the diagram could appear to have high doubling dimension; all $n$ points could be one unit from the diagonal and thus equidistant (exactly two units) from each other.
In this section, we will show how to modify the algorithm so that we can achieve the same $O(n \log \spread)$ running time for persistence diagrams.
The main insight is to augment the PD with its projection to the diagonal and modify the way distances are computed.
We will also give an example where the direct application of Clarkson's algorithm devolves to quadratic time.

\begin{figure}
  \centering
  \includegraphics[width=0.7\textwidth]{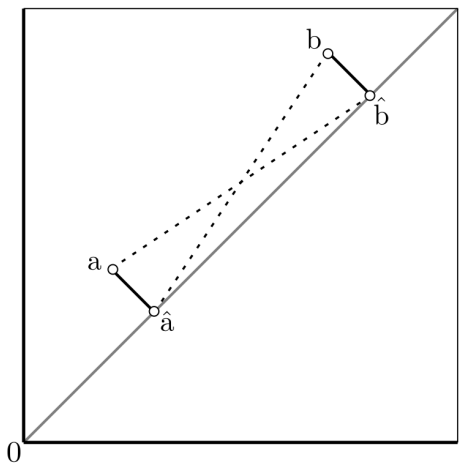}
  \caption{The triangle inequality fails when treating the diagonal as a point. If $\dist(a,b)\gg \dist(a,\hat{a})$ and $\dist(a,b) \gg \dist(b, \hat{b})$ then the triangle inequality fails as $\dist(a,b) > \dist(a,\hat{a}) + \dist(\hat{a},b)$}
  \label{fig:triangle_ineq_fail}
\end{figure}

\subsection{Overview of Clarkson's Algorithm}

In his work on nearest neighbor search in metric spaces~\cite{clarkson97nearest, clarkson99nearest,clarkson03nearest,clarkson06nearest}, Clarkson developed several data structures based on a kind of discrete Voronoi diagram.
Points are inserted into the structure one at a time and the uninserted points are assigned to their nearest neighbors among the inserted points.
The set of points whose nearest neighbor is a point $p$ are called the \emph{(discrete) Voronoi cell} of $p$ and are denoted $\vor(p)$.
A neighborhood graph is constructed on the inserted points with edges between points $p$ and $q$ representing the possibility that inserting a point in $\vor(p)$ would alter $\vor(q)$ or vice versa.
As points are inserted, updates to the structure remain local with respect to this neighborhood graph.
Some extra edges are maintained to simplify construction, pruning only those between points whose distance is more than some constant times their radii.

The Voronoi cells provide exactly the information required to compute both the natural reweighting as well as the optimal transportation plans.
The improvements in running time come from the sparsity of the neighborhood graph.
That sparsity will also translate into the sparsity of the PD sketch.

The simplest version of Clarkson's data structure is called $sb$ and was implemented in C.
It was originally designed to use a random permutation of the input points, but after experimental analysis, it was shown that performance improved when using a greedy permutation~\cite{clarkson03nearest}.
The data structure also finds the greedy permutation efficiently as part of the construction: knowing the Voronoi cells of the points added so far makes it easy to add the next farthest point by storing the Voronoi cells in a priority queue.
Later, Har-Peled and Medel showed that this algorithm runs in $O(n\log \spread)$ time in doubling metrics.
A Python implementation of this algorithm is also available~\cite{sheehy20greedypermutations}.

Consider a PD in which all the points are on a line one unit from the diagonal and are spaced out two units apart.
With a slight perturbation, any given ordering can be the greedy permutation.
If the ordering is sorted along the line, then each new point would require checking all the other points to see if they move.
Thus the total running time is quadratic, even though the spread is linear. This is illustrated in Fig. ~\ref{fig:quad_case}.
The reason this seems to violate the $O(n\log \spread)$ running time is that the big-O hides a term exponential in the doubling dimension.
If the distances are replaced with shortest path distances to enforce the triangle inequality, the doubling dimension of this example becomes $\log n$ because the the path from a point to the diagonal plus the diagonal to another point results in all points equidistant.
So, in order to compute the greedy permutation in subquadratic time using Clarkson's algorithm, one must treat the diagonal differently.

\begin{figure}
    \includegraphics[width=0.19\textwidth]{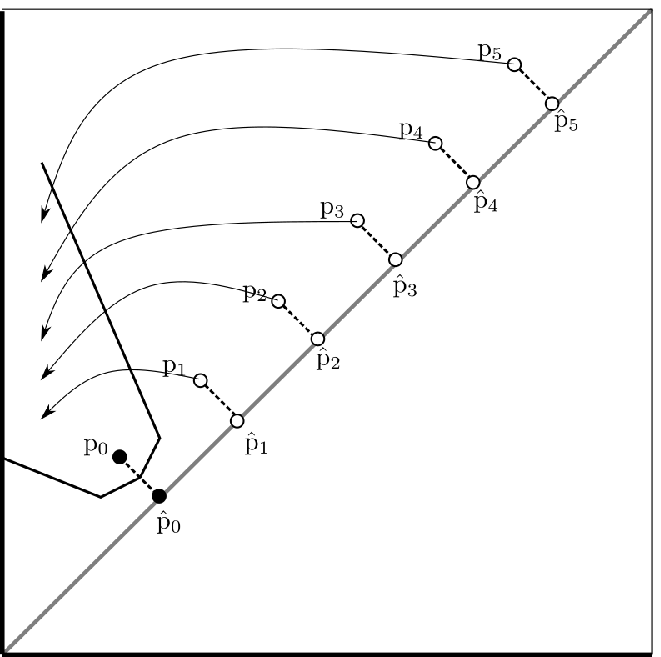}
    \includegraphics[width=0.19\textwidth]{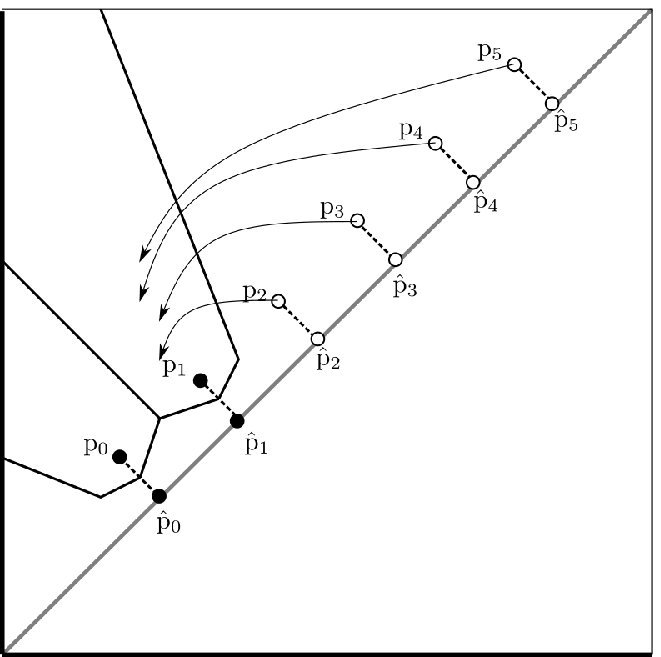}
    \includegraphics[width=0.19\textwidth]{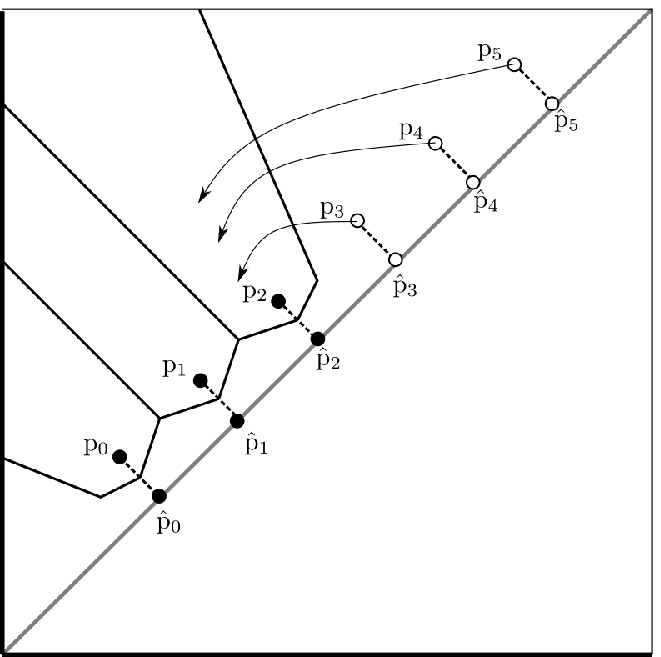}
    \includegraphics[width=0.19\textwidth]{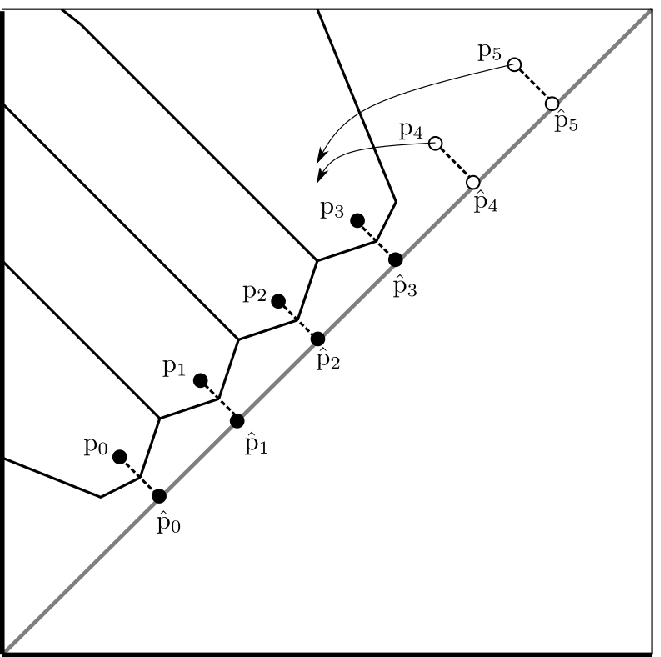}
    \includegraphics[width=0.19\textwidth]{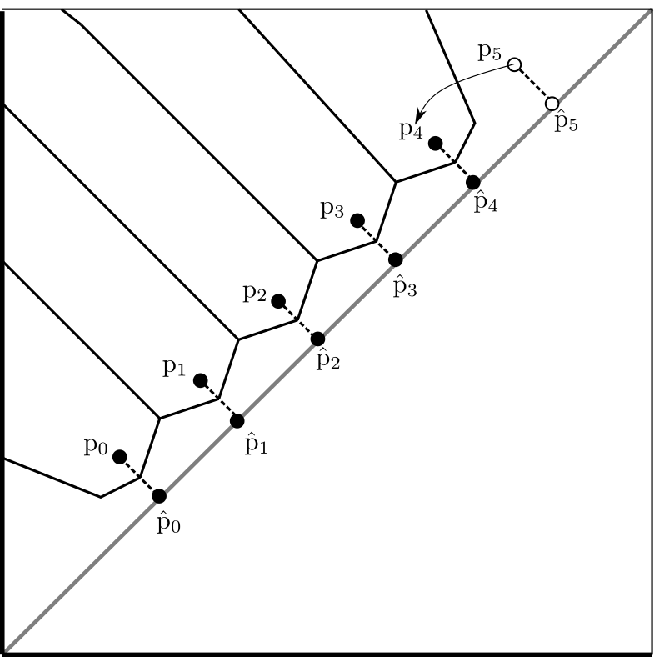}
  \caption{In this example, Clarkson's algorithm devolves to quadratic time. As points are added $p_0$ to $p_5$, all other points have to be checked to see if they move.}
  \label{fig:quad_case}
\end{figure}

\subsection{Using Projections}

Let $D$ be a persistence diagram.
Let $X$ be the nondiagonal points of $D$ and let $\hat{X}$ be the projections of the points of $X$ to the diagonal.
For each $a\in X$, we will write $\hat{a}$ for its projection.
Ultimately, we will compute the greedy permutation of $X\cup \hat{X}$ and retain only the ordering on $X$.

Let $S$ be a subset of $X\cup \hat{X}$.
The points $\hat{X}$ naturally partition the diagonal into segments where for any $\hat{x}\in S\cap \hat{X}$,
\[
  \segment_S(\hat{x}) := \{\hat{y} \mid \text{for all }\hat{z}\in S\cap \hat{X}, \|\hat{x} - \hat{y}\| \le \|\hat{z} - \hat{y}\|\}.
\]
When partitioning the points into discrete Voronoi cells, we will consider the Voronoi cells of the segments for the diagonal points.
Formally,
\[
  \dist_S(x,y) :=
    \begin{cases}
      \min_{\hat{x}\in \segment_S(x)} \|\hat{x} - y\| & \text{ if }x\in \hat{X}\\
      \|x - y\| & \text{otherwise.}
    \end{cases}
\]

This way of computing distances does not give a metric, but it will give the correct notion of discrete Voronoi cells, defined as
\[
  \vor_S(x) := \{y\in X\cup \hat{X} \mid \text{ for all }z\in S, \dist_S(x,y)\le \dist_S(z,y)\}.
\]

Note that the addition of more points of $\hat{X}$ into $S$ will not affect the Voronoi cells of points in $S\cap X$.
This is the desired invariant because it means that the union of the Voronoi cells of the diagonal points only depends on the nondiagonal points.
The points of $S\cap \hat{X}$ serve only to partition what would otherwise be one large cell associated to the diagonal.

The \emph{projection distance} is
\[
  r(x, y):= \begin{cases}
    \|y-\hat{y}\| & \text{if }x\in \hat{X} \text{ and } y\in X\\
    \|x-y\| & \text{otherwise.}
  \end{cases}
\]
The \emph{radius} of a Voronoi cell of a point $x\in S$ is
\[
  \rad(x) := \max_{y\in \vor_S(x)} r(x, y).
\]

The modified Clarkson algorithm then computes the greedy PD sketch by maintaining these Voronoi cells.
The next point added at each step is the farthest point (in projection distance) in the Voronoi cell of maximum radius.
The neighborhood graph connects two Voronoi cells as long as their distance is less than four times the larger of their radii.

The key to the analysis of Clarkson's aglorithm is that the neighborhood graph maintains constant degree.
One must keep enough edges so that the following two conditions hold.
\begin{enumerate}
  \item The Voronoi cell of a newly inserted point is contained in the union of the cells of the neighbors of its parent just prior to insertion.
  \item The neighbors of a newly inserted point are contained in the union of the neighbors of neighbors of its parent just prior to insertion.
\end{enumerate}

Our changes to how distances are computed produce only a small change to the analysis of Clarkson's algorithm, requiring us to store edges in the neighborhood graph up to four times the radius (rather three times the radius as in the original).
Appendix~\ref{sec:neighbor_analysis} gives the details on why this small change is sufficient (by repeated use of the triangle inequality).

\begin{theorem}\label{thm:contruction_correctness}
  The modified Clarkson algorithm restricted to the nondiagonal points gives a greedy PD sketch.
\end{theorem}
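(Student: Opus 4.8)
The plan is to show that the per-step rule of the modified algorithm---insert the point of largest projection distance lying in the discrete Voronoi cell of largest radius---coincides, once restricted to the nondiagonal points, with the farthest-from-prefix rule that defines a greedy PD sketch. I will take for granted that the data structure correctly maintains the cells $\vor_S$ and the radii $\rad$ at each step; that is exactly Clarkson's analysis, the single change (keeping neighborhood edges out to four times the radius rather than three) being justified in Appendix~\ref{sec:neighbor_analysis}. The real work is to identify the ``priority'' $r(x,y)$ the algorithm assigns to an uninserted point $y$ with owner $x$ (i.e.\ $y\in\vor_S(x)$) and to check that, read off on the nondiagonal points, it is the correct quantity. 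I also assume the algorithm is initialized with the diagonal represented---equivalently, $S$ begins with one point of $\hat X$, whose segment is the whole diagonal---so that $S\cap\hat X\neq\emptyset$ at every step; this is what makes $D_0$ the diagonal alone.

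First a reduction. Fix a moment of the execution; write $S_X=S\cap X$, $S_{\hat X}=S\cap\hat X$, and $j=|S_X|$, so that $\flat{D_j}$ is $S_X$ together with the diagonal and $\flat{D_j}\subseteq\flat D$. Since every diagonal point is at distance $0$ from $\flat{D_j}$, the Hausdorff distance reduces to
\[
  \dist_H(\flat{D_j},\flat D)=\max_{y\in X}\dist(y,\flat{D_j})=\max_{y\in X}\min\{\dist(y,S_X),\ \|y-\hat y\|\}.
\]
So it suffices to prove that the algorithm produces the nondiagonal points in some order $p_0,p_1,\dots$ such that at the step where $p_j$ is inserted we have $S_X=\{p_0,\dots,p_{j-1}\}$ and $\dist(p_j,\flat{D_j})$ equals the right-hand side.

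The main step---and the one I expect to require the most care---is the priority computation. The key observation is that the segments $\segment_S(\hat x)$, $\hat x\in S_{\hat X}$, partition the entire diagonal, so that
\[
  \min_{\hat x\in S_{\hat X}}\dist_S(\hat x,y)=\|y-\hat y\|
\]
for every $y$, independent of how finely $S_{\hat X}$ cuts up the diagonal. For $y\in X$ I would then split on whether its owner $x$ lies in $S_X$ or in $S_{\hat X}$. If $x\in S_X$, then $r(x,y)=\|x-y\|$, and ownership forces this to be both $\dist(y,S_X)$ and at most $\|y-\hat y\|$. If $x\in S_{\hat X}$, then by definition $r(x,y)=\|y-\hat y\|$, and ownership forces $\|y-\hat y\|\le\dist(y,S_X)$. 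Either way, $r(x,y)=\min\{\dist(y,S_X),\|y-\hat y\|\}=\dist(y,\flat{D_j})$; in particular this depends on $S_X$ only. (One can likewise check an uninserted $\hat a\in\hat X$ has priority $\dist(\hat a,S_{\hat X})$, but all that is needed below is that inserting points of $\hat X$ leaves every nondiagonal priority unchanged.)

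The remainder is the standard farthest-point argument. Since $\hat X$-insertions never alter $S_X$, the nondiagonal points emerge in a well-defined order $p_0,\dots,p_{n-1}$---the intervening $\hat X$-insertions only refine the diagonal and are discarded---and at the step inserting $p_j$ we have $S_X=\{p_0,\dots,p_{j-1}\}$ while $p_j$ attains the maximum priority over all uninserted points (the cell of maximum radius contains the highest-priority uninserted point, by definition of $\rad$, and the algorithm picks it). Hence its priority is at least $\dist(y,\flat{D_j})$ for every uninserted nondiagonal $y$, while every already-inserted nondiagonal point lies in $\flat{D_j}$ and is at distance $0$; therefore $\dist(p_j,\flat{D_j})=\max_{y\in X}\dist(y,\flat{D_j})=\dist_H(\flat{D_j},\flat D)$, exactly the defining condition of a greedy PD sketch. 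The case $j=0$ is the base case: before any nondiagonal insertion $S_X=\emptyset$, so the priority of $y\in X$ reduces to $\|y-\hat y\|$ and $p_0$ is a point of maximum persistence, consistent with $D_0$ being the diagonal alone.
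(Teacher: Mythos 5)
Your proof is correct and follows essentially the same route as the paper's: the paper's (much terser) argument simply notes that the algorithm explicitly selects the farthest point and that, by the definition of the projection distance, inserting diagonal projections does not change the Voronoi cells or priorities of the nondiagonal points. Your write-up fills in exactly that claim, verifying that the priority $r(x,y)$ of an uninserted nondiagonal point equals $\min\{\dist(y,S_X),\|y-\hat y\|\}=\dist(y,\flat{D_j})$ regardless of which points of $\hat X$ have been inserted, which is the detail the paper leaves implicit.
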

\begin{proof}
  It suffices to prove that each time a nondiagonal point $p_i$ is added, it is the farthest point from $D_i$.
  The algorithm explicitly chooses the farthest point at each step, so we need only observe that the inclusion of the projections do not change the Voronoi cells of the nondiagonal points.
  In other words, the projections do not change the order in which the nondiagonal points are added.
  This follows from the definition of the projection distance.
\end{proof}

\begin{theorem}\label{thm:running_time_clarkson}
  The greedy PD sketch can be computed in $O(n\log \spread)$ time.
\end{theorem}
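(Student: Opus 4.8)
The plan is to observe that the procedure of Section~\ref{sec:clarkson} is, up to constant factors, Clarkson's data structure run on the augmented point set $X\cup\hat X$, and then to invoke the analysis of Har-Peled and Mendel~\cite{harpeled06fast}, which gives a running time of $2^{O(d)}n\log\spread$ for points drawn from a metric of doubling dimension $d$. Every point of $X\cup\hat X$ lives in the plane under the $L_\infty$ norm $\|\cdot\|$, which has doubling dimension $O(1)$, and $|X\cup\hat X|\le 2n=O(n)$, so this yields the claimed $O(n\log\spread)$ bound --- \emph{provided} the two departures from the original setting do no harm: (i) the genuine metric is replaced by the segment distance $\dist_S$, the projection distance $r$, and their induced (segment-)Voronoi cells; and (ii) edges of the neighborhood graph are kept out to four times, rather than three times, the larger radius. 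Concretely, I would verify the two facts that the Har-Peled--Mendel analysis rests on: the neighborhood graph has $O(1)$ degree, and each point changes Voronoi cells only $O(\log\spread)$ times over the whole run.

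For the degree bound I would adapt the standard observation. After $t$ points have been inserted, every Voronoi cell has radius at most $\e_t$, the radius of the cell about to be split, while the $t$ centers are pairwise $\Omega(\e_t)$-separated: for two nondiagonal centers $p_i,p_j$ with $i<j\le t$ this is the familiar greedy consequence $\|p_i-p_j\|\ge\e_j\ge\e_t$, and for pairs involving a diagonal center one loses at most a constant factor, since the $\dist_S$-distance from a diagonal segment is within a constant of the relevant $L_\infty$ distance or insertion radius. Because the neighborhood graph joins only cells within four times the larger radius --- hence within $O(\e_t)$ --- planar packing leaves room for just $O(1)$ of the $\Omega(\e_t)$-separated centers, so the degree is $O(1)$. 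These constant-factor losses from working with $\dist_S$ and $r$ are precisely why edges must be retained out to four times the radius rather than three; Appendix~\ref{sec:neighbor_analysis} does this bookkeeping and also establishes the two update invariants --- the Voronoi cell of a newly inserted point, and that point's neighbors, are found among the neighbors and neighbors-of-neighbors of its parent just before insertion. One structural fact makes all of this possible and is why the projections were introduced: inserting further points of $\hat X$ never changes the Voronoi cell of a nondiagonal center, so the diagonal never acts as a single vertex of degree $\Theta(n)$ --- exactly what would otherwise cause the quadratic blow-up of Fig.~\ref{fig:quad_case}.

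For the bound on reassignments I would run the algorithm only until every nondiagonal point has been inserted; by the invariant just mentioned, the greedy PD sketch is already determined at that moment, and since the insertion radii are monotone, every insertion radius encountered lies between the last nondiagonal one, $\e_{n-1}$, and the first one, $\e_0$ (the maximum persistence), with $\e_0/\e_{n-1}=O(\spread)$. Hence the insertions split into $O(\log\spread)$ phases, phase $j$ consisting of the steps whose insertion radius lies in $[\e_0/2^{j+1},\e_0/2^{j})$. Within a single phase a fixed point $y$ can be reassigned only $O(1)$ times: each reassignment places $y$ in the cell of a center inserted in that phase, at projection distance $O(\e_0/2^{j})$ (when this center is diagonal one argues through $\hat y$ and the fact that $\hat y$ lies in the center's segment); the centers $y$ visits during the phase are pairwise $\Omega(\e_0/2^{j})$-separated by greediness and confined to a region of diameter $O(\e_0/2^{j})$ around $y$ (or around $\hat y$); so planar packing bounds their number by $O(1)$. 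Summing over the $O(n)$ points and $O(\log\spread)$ phases bounds the total number of reassignments by $O(n\log\spread)$; adding the $O(1)$ neighborhood-graph work per insertion, the $O(n)$ insertions, and the priority-queue bookkeeping (charged exactly as by Har-Peled and Mendel, contributing at most $O(n\log\spread)$, using $n=\spread^{O(1)}$ from planar packing) yields the stated running time, with correctness of the output as a greedy PD sketch supplied by Theorem~\ref{thm:contruction_correctness}.

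The step I expect to be the main obstacle is the $O(1)$-degree claim under the modified distances. In a metric space it is immediate that two cells with far-apart centers cannot be neighbors, but $\dist_S$ is not a metric --- a short $\dist_S$-distance from a diagonal segment to a point does not compose freely with other short distances, and the radius $\rad$ is measured in yet another quantity, $r$. Pushing the packing argument through therefore requires carefully tracking which of these quantities is an honest planar distance, repeatedly passing through the nearest point of the relevant diagonal segment, and absorbing the resulting slack into the constant factors; this is the content of Appendix~\ref{sec:neighbor_analysis}, and everything else in the running-time analysis is then routine.
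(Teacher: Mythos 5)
Your proposal is correct and follows essentially the same route as the paper's proof: reduce to the Har-Peled--Mendel/Clarkson analysis on the augmented point set, bound the work by a packing argument counting how often a point can be reassigned (a constant number of times per halving of the radius), and handle the key caveat that the projections could inflate the spread by stopping once all nondiagonal points are inserted, so the relevant radii span only a factor of $O(\spread)$. Your write-up simply fills in more of the packing and phase details (and the constant-degree bookkeeping deferred to Appendix~\ref{sec:neighbor_analysis}) than the paper's terse argument does.
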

\begin{proof}
  Using the projections, the analysis of the running time is the same as the standard analysis as given in Har-Peled and Mendel~\cite{harpeled06fast} and also in Clarkson~\cite{clarkson03nearest}.
  The key idea is to count the number of times any point is considered for moving into a new Voronoi cell.
  By a volume packing argument, this can happen only a constant number of times before the maxmimum radius goes down by at least a factor of $2$.
  The one caveat is that the projection could artificially increase the spread.
  This is resolved by stopping the algorithm as soon as all of the nondiagonal points are inserted.
  This happens at scale $\frac{1}{\spread}$ times the diameter and therefore, the total running time is $O(n\log \spread)$ as desired.
\end{proof}

A natural improvement in many instances would be to stop early and return a prefix of the PD sketch.
In particular, if one has other sources of error---such as the grid size in persistence images~\cite{adams17persistent}---one need not compute the full sketch.
For constant precision, the running time will be linear.

\begin{corollary}
  Let $D$ be a PD, and let $R_{\max}$ be the maximum persistence of any point in $D$.
  For a fixed precision $\e$, a partial greedy PD sketch $(D_0,\ldots, D_k)$ can be computed such that
  \[
    \dist_B(D_k, D)\le \e
  \]
  in $O(n \log \frac{R_{\max}}{\e})$.
\end{corollary}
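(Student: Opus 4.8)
The plan is to run the modified Clarkson algorithm of Section~\ref{sec:clarkson} unchanged, but to halt it as soon as the maximum Voronoi-cell radius drops to $\e$ (equivalently, as soon as the next insertion radius $\e_k$ would be at most $\e$), returning the prefix $(D_0,\ldots,D_k)$ computed so far; if all nondiagonal points are inserted before that happens, we of course stop there with $D_k = D$.

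For correctness, by Theorem~\ref{thm:contruction_correctness} everything produced up to the stopping point is a prefix of a greedy PD sketch, hence is itself a partial greedy PD sketch. At the moment we stop, the largest radius is at most $\e$, so $\e_k = \dist_H(\flat{D_k}, \flat{D}) \le \e$. Since $D_k$ is a natural reweighting of a subset of $D$, Lemma~\ref{lem:natural_reweighting} gives $\dist_B(D_k, D) = \dist_H(\flat{D_k}, \flat{D}) = \e_k \le \e$, which is exactly the claimed bound (in the degenerate case where we stopped because $D_k = D$, the bound is immediate).

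For the running time, I would reuse the amortized analysis behind Theorem~\ref{thm:running_time_clarkson}: by the volume-packing argument, any fixed point is reassigned to a new Voronoi cell only a constant number of times before the maximum radius halves, so each ``halving phase'' costs $O(n)$ in total. The first insertion radius is $\e_0 = \dist_H(\text{diagonal}, \flat{D}) = R_{\max}/2$, since under the $L_\infty$-norm a point of persistence $R_{\max}$ is at distance $R_{\max}/2$ from the diagonal; we stop once the radius reaches $\e$, so there are $O(\log(R_{\max}/\e))$ phases and the total is $O(n\log(R_{\max}/\e))$. In the edge case where $\e$ is below the scale at which the last nondiagonal point enters, we simply run the full algorithm; by Theorem~\ref{thm:running_time_clarkson} this finishes in $O(n\log\spread)$ time, and since that termination scale is at least $\e$ we have $\log\spread = O(\log(R_{\max}/\e))$, so the bound still holds. (We may assume $\e \le R_{\max}/2$, since otherwise $k=0$ and the statement is trivial.)

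The main obstacle I anticipate is the bookkeeping around the projection points $\hat{X}$. As noted after Theorem~\ref{thm:running_time_clarkson}, the projections can artificially inflate the spread, so the stopping rule must be phrased in terms of the projection-distance radius $\rad$, and the phase count must be measured against the nondiagonal diameter $R_{\max}$ rather than any quantity involving inter-projection distances. Confirming that early truncation of the greedy permutation leaves the $O(n)$-per-phase packing bound intact is the one place that needs care; the rest is a direct specialization of the earlier theorems.
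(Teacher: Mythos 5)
Your proposal is correct and follows essentially the same route the paper intends: the corollary is stated as a direct consequence of the early-stopping remark after Theorem~\ref{thm:running_time_clarkson}, i.e., run the modified Clarkson algorithm, halt once the maximum (projection-distance) radius reaches $\e$, and count $O(\log(R_{\max}/\e))$ halving phases starting from the initial scale $R_{\max}/2$, each costing $O(n)$. Your correctness argument via Theorem~\ref{thm:contruction_correctness} and Lemma~\ref{lem:natural_reweighting}, and your handling of the projection/spread caveat, match the paper's reasoning.
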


  \section{Updating a Matching}
\label{sec:matching_update}

Starting from a greedy PD sketch of $D$ and a second PD $X$, an approximation $D_i$ in the sketch can be used to estimate $\dist_B(X,D)$.
In doing so, an optimal bottleneck matching between $D_i$ and $X$ is computed.
Choosing a larger value of $i$ will increase the precision.
In this section, we show how to bootstrap the computation that has already been done for $D_i$ to get good matching between $X$ and $D_{i+1}$.
The principle idea is to compose a matching $X\to D_i$ with a matching $D_i\to D_{i+1}$ that is consistent with the transportation plan $T_i$.
The result will not necessarily be the optimal matching $X\to D_{i+1}$, but it will satisfy the cost guarantee of Lemma~\ref{lem:error_bound}.
Thus, in many cases, most of the work of finding the bottleneck matching will already be done.

\paragraph*{Naive Update}
The transportation plan $T_i: \flat{D_i} \times \flat{D_{i+1}} \to \Z$ encodes an equivalence class of matchings.
Any matching $M_i:D_i\to D_{i+1}$ that has $T_i(q,q')$ edges from $q$ to $q'$ is in this class.
One can easily choose such a matching arbitrarily and compose it with the previously computed optimal matching $M:X\to D_i$.
That is, $M_i\circ M:X\to D_{i+1}$ is a matching.
Because $M_i$ has bottleneck cost at most $\e_i$, the new matching will increase in cost by at most $\e_i$ (by the triangle inequality).

\paragraph*{A Short Auction}
It is possible to choose locally optimal matching update consistent with $T_i$.
This is perhaps most easily understood in terms of the $p$-Wasserstein cost of the matching.
Minimizing this cost is equivalent to minimizing the $p$th power of the cost.
So, replacing one edge $xq$ with an edge $xp_i$ in a matching $M'$ results in the following change.
\[
  \cost_p(M')^p - \cost_p(M)^p = \dist(x,p_i)^p - \dist(x,q)^p.
\]
Thus, for each such edge $xq$, we compute the corresponding $p$th power change in cost.
Then, we make the update the edges in order of this change.
This resembles an abbreviated version of the auction algorithm~\cite{bertsekas79distributed}.

For bottleneck matchings, one cannot assign costs using the $p$th power of the distances, because $p=\infty$ in that case.
Instead, one can find the optimal matching from the neighbors of $q$ in the matching $M$ to the points $q$ and $p_i$ (with multiplicities).
This only requires sorting the $O(n)$ edges by their length.
The bottleneck matching can be found in the minimum prefix of this ordering that has sufficiently many edges incident to the points of $X$ and the points $q$ and $p_i$ (by Hall's Theorem).
We call the resulting matching, \emph{the locally optimal matching consistent with $T_i$}.

In either case, the time is bounded by the cost of sorting the points of matched to $q$ for each $q$ that shifts some mass to $p_i$ in the transportation plan $T_i$.
We have already shown that there are only at most 25 such points $q$, but with multiplicity, there could be as many as $|X| = \Theta(n)$ edges to consider.
This makes the overall, worst-case running time $O(n\log n)$ for an update to the matching.


\paragraph*{Amortizing the Cost of Updates}
Although the worst case cost of updating the matching when going from $D_i$ to $D_{i+1}$ is $O(n\log n)$, not every such update can be that expensive.
Below, we show that a sequence of these updates from $D_i$ to $D_j$ such that $\e_i \le 2\e_j$ will also only take $O(n \log n)$ time.
This means adding points in the sketch to halve the error (double the precision) is asymptotically the same (in the worst-case) as updating the matching for one new point.

\begin{theorem}
  Let $(D_0,\ldots, D_n)$ be a greedy PD sketch of $D$.
  For any $i,k$ such that $\e_i \le 2\e_k$, the total cost of computing the locally optimal matchings $M_j$ consistent with $T_j$ for all $i\le j \le k$ can be computed in $O(n\log n)$ time.
\end{theorem}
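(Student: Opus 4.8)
The plan is to bound the total work by a single sorting step of size $O(n)$, using the fact that the relevant edges—those that must be re-examined across the whole sequence of updates from $D_i$ to $D_k$—come from only a constant number of Voronoi cells at each step, and that these cells cannot be ``visited'' too many times before the radius scale drops. First I would recall that, as shown in the proof that the sketch has size $O(n)$, each transportation plan $T_j$ shifts mass to $p_j$ from at most $25$ neighbors $q$ in $D_j$, and each update for step $j$ costs $O(m_j \log m_j)$ where $m_j$ is the total number of matching-edges of $M$ incident to those neighbors. The naive bound $m_j = O(n)$ per step gives $O(nk\log n)$ over the range, which is too weak; the point of the theorem is that $\sum_{j=i}^{k} m_j = O(n)$ whenever the radii satisfy $\e_i \le 2\e_k$.

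The key step is the charging argument. Each matching-edge $xq$ (with $x\in X$, $q$ a nondiagonal point or the diagonal of $D_j$) is counted in $m_j$ only when $q$ shifts mass to $p_j$; by the greedy ordering this happens only when $\|x-p_j\| < \|x-q\| \le \e_j$, i.e. when $p_j$ lands within distance $\e_j$ of $x$. I would fix an edge $xq$ and count how many indices $j\in[i,k]$ can charge to it. Because the inserted points $p_j$ are $\e_j$-separated and $\e_j \ge \e_k \ge \e_i/2$ for all $j$ in the range, the points $p_j$ that fall within distance $\e_j$ of the fixed point $x$ are $(\e_i/2)$-separated points lying in a ball of radius $\e_i$ around $x$; a volume/packing argument in the plane (with the $L_\infty$ metric) bounds their number by a constant. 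Hence each edge $xq$ is charged $O(1)$ times, and $\sum_{j=i}^{k} m_j = O(n)$. Summing the per-step sorting costs and using $\sum_j m_j \log m_j \le \big(\sum_j m_j\big)\log n = O(n\log n)$ (or, better, sorting all the charged edges once) gives the bound.

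The main obstacle I expect is making the packing argument fully rigorous in the presence of the diagonal: the edge $xq$ may have $q$ equal to the diagonal (really one of its segment-representatives), and a single point $x$ could have its nearest nondiagonal neighbor change to $p_j$ for several different $j$, so I must be careful that it is the separation of the $p_j$'s—not of the $q$'s—that controls the count, and that the ``radius halving'' hypothesis $\e_i \le 2\e_k$ is exactly what makes all the $\e_j$ in the range comparable. One subtlety worth stating explicitly is that an edge can only be charged while its right endpoint is still ``active'' (mass has not yet been fully shifted away), so once $x$ is reassigned to some $p_j$ it leaves the cell of $q$ and cannot be charged again on behalf of $q$; combined with the constant bound on the number of $p_j$ near $x$, this pins the total charge per edge to $O(1)$. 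With that in hand the running time follows immediately.
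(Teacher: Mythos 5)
Your overall strategy is the same as the paper's: bound the cost of step $j$ by the number of matching edges incident to the constantly many cells that shed mass to $p_j$, and then show, using the separation of the inserted points together with the hypothesis $\e_i\le 2\e_k$, that each edge is re-sorted only $O(1)$ times over the whole range, giving $O(n\log n)$ total. The paper obtains the $O(1)$ bound by citing the Clarkson/Har-Peled--Mendel analysis (each point of $D$ is ``touched'' only $O(1)$ times before the insertion radius halves); you replace that citation with an explicit packing argument, which is a reasonable substitute.

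However, the charging step as you state it fails. You fix an edge $xq$ with $x\in X$ and claim it is counted in $m_j$ only when $\|x-p_j\| < \|x-q\|\le \e_j$, i.e., only when $p_j$ lands within $\e_j$ of $x$. Neither inequality is available: $x$ belongs to the \emph{other} diagram $X$, so $\|x-q\|$ is controlled only by the cost of the matching $X\to D_j$, which can be arbitrarily large compared to $\e_j$; and, more importantly, $q$ sheds mass to $p_j$ because some point $y\in D$ in $q$'s discrete Voronoi cell becomes closer to $p_j$, while the edge $xq$ must be re-sorted whenever \emph{any} mass of $q$ moves, not just a unit ``corresponding'' to $x$. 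So your packing ball is centered at the wrong point, and the number of indices $j$ charging the edge is not bounded by it. The repair is short: if $T_j(q,p_j)>0$, there is $y\in D$ with $\NN_{D_j}(y)=q$ and $\|y-p_j\|<\|y-q\|\le\e_j$, hence $\|q-p_j\|\le 2\e_j\le 2\e_i$; since the points $p_j$ with $i\le j\le k$ are pairwise at least $\e_k\ge\e_i/2$ apart, a packing bound \emph{centered at $q$} shows each $q$ sheds mass at only $O(1)$ steps in the range, so each of its incident edges is re-sorted $O(1)$ times and $\sum_j m_j=O(n)$. Your ``active endpoint'' remark does not substitute for this: $q$ always retains at least its own mass, so its incident edges can recur at later steps absent the packing bound, and points of $X$ are never assigned to Voronoi cells in the first place. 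With the ball re-centered at $q$ (or at the points of $D$ in its cell), your argument goes through and essentially reproduces the paper's proof.
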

\begin{proof}
  The total cost is $\sum_{q\in Q} \mult(q) \log(\mult(q))$ where $Q$ is the set of points whose mass changes at some point in the sequence of updates.
  Recall that by the definition of the natural reweighting used in the greedy PD sketch, the multiplicity of a point $q\in D$ is equal to the number of points in a discrete Voronoi cell at some point in the construction.
  If $T_j(q, p_j)> 0$, then $q$ and $p_j$ have neighboring Voronoi cells in some step of the construction.
  This means that every point in $\vor(q)$ is touched when $p_j$ is inserted to see if it must be move.
  So, as we observed in the analysis of the construction (Theorem~\ref{thm:running_time_clarkson}), any given point can be touched at most $O(1)$ times before the insertion radius decreases by a factor of $2$.
  As the edges incident to any point $q$ in the matching are in correspondence with the points of $\vor(q)$, it follows that each such edge participates in at most $O(1)$ updates.
  In other words, $\sum_{q\in Q} \vor(q) = \sum_{q\in Q} \mult(q) = O(n)$.
  So, the total cost is
  \[
    \sum_{q\in Q} \mult(q) \log(\mult(q)) \le \sum_{q\in Q} \mult(q) \log(n) =  O(n\log n).
  \]
\end{proof}

  \section{Filtered Neighborhood Graphs}

Matchings are computed on neighborhood graphs.
In this section, we show how the greedy sketch computation can also simplify the construction of these graphs.
We then show how this same construction allows for fast Hausdorff approximation to between PD sketches, leading to fast lower bounds on the bottleneck distance.

The \emph{$\alpha$-neighborhood graph} on a set $P$ is the graph
\[
  \nbrhd(P, \alpha) := (P, \{(p_i, p_j) \mid \dist(p_i, p_j)\le \alpha\}).
\]
If the points of $P$ are all pairwise $\lambda$ apart, then the degree of any vertex in $\nbrhd(P, \gamma\lambda)$ cannot exceed $(2\gamma + 1)^2$.
This follows because the squares of side length $\lambda$ centered at the neighbors will be disjoint and contained in the square of side length $(2\gamma + 1)\lambda$.

Let $P = (p_0, \ldots, p_{n-1})$ be a set ordered according to a greedy permutation.
The \emph{$\gamma$-filtered neighborhood graph} on $P$ is the graph with vertices $P$ and edges $(p_j, p_i)$ whenever $i< j$ and $\dist(p_i, p_j) < \gamma\lambda_j$.
Because $P_j$ is a $\lambda_j$-net for all $j$, there will be at most $(2\gamma + 1)^2$ neighbors of $p_j$ that precede it in the greedy permutation.
Thus, the total number of edges is $(2\gamma + 1)^2n$.
Moreover, the graph contains $\nbrhd(P_i, \gamma\lambda_i)$ for all $i$.

When computing the greedy permutation, one can compute the filtered neighborhood graph at the same time in the same asymptotic running time.
This is the underlying idea in the Clarkson algorithm (the graph is an undirected version of the $sb$ data structure).

For a pair of compact sets $P,Q$, the bipartite $\alpha$-neighborhood graph is
\[
  \binbrhd(P, Q, \alpha) := (P\sqcup Q\mid \{(p,q)\in P\times Q \mid \dist(p,q)\le \alpha)
\]
The Hausdorff distance between $P$ and $Q$ is the minimum $\alpha$ such that $\binbrhd(P,Q,\alpha)$ contains no isolated vertices.
The bottleneck distance between $P$ and $Q$ is the minimum $\alpha$ such that $\binbrhd(P,Q,\alpha)$ contains a perfect matching.
If $P$ and $Q$ are persistence diagrams, one adds the diagonal as an extra vertex to each set and adds edges from points to the diagonal if their projection to the diagonal is within $\alpha$.


The main way that ``geometry helps'' for matching problems in the plane is that one can use a geometric data structure to implicitly store this graph.
However, when working with approximations, one can show that the bipartite neighborhood graph has linear size and can be computed in linear time if one has already precomputed the filtered neighborhood graphs of the two sets.
Below, we explain the construction.


\begin{lemma}\label{lem:isolated_vertex_hausdorff_bound}
  Let $R,B\subset \R^2$.
  If $\binbrhd(R_\lambda, B_\lambda, \gamma\lambda)$ has an isolated vertex, then $\dist_H(R, B)\ge \lambda(\gamma-1)$.
\end{lemma}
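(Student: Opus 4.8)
The plan is to unwind the definitions and chase the meaning of an isolated vertex in the bipartite neighborhood graph on the nets $R_\lambda$ and $B_\lambda$. First I would recall that $R_\lambda$ denotes a $\lambda$-net of $R$ (a prefix of the greedy permutation at scale $\lambda$), so by definition every point of $R$ is within $\lambda$ of some point of $R_\lambda$, and likewise for $B_\lambda$ and $B$. The key objects are: an isolated vertex $v$ in $\binbrhd(R_\lambda, B_\lambda, \gamma\lambda)$, which by definition of the bipartite $\alpha$-neighborhood graph means $v$ lies in one side, say $v\in R_\lambda$, and $\|v - w\| > \gamma\lambda$ for every $w\in B_\lambda$ (the diagonal vertex, if present, is handled the same way since $v$ is nondiagonal and its projection distance is just another instance of the $L_\infty$ distance bound).

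The main step is a triangle-inequality argument to transfer this separation from the nets back to the full sets. Take the point $v\in R_\lambda\subseteq R$; I want to show $v$ is far from all of $B$, which forces the Hausdorff distance to be large. Let $b$ be any point of $B$. Since $B_\lambda$ is a $\lambda$-net of $B$, there is some $w\in B_\lambda$ with $\|b - w\|\le \lambda$. Then by the triangle inequality,
\[
  \|v - b\| \ge \|v - w\| - \|w - b\| > \gamma\lambda - \lambda = \lambda(\gamma - 1).
\]
Since $b\in B$ was arbitrary, $\inf_{b\in B}\|v - b\| \ge \lambda(\gamma-1)$, and since $v\in R$, this gives $\sup_{r\in R}\inf_{b\in B}\|r-b\| \ge \lambda(\gamma-1)$, hence $\dist_H(R,B)\ge \lambda(\gamma-1)$ by the definition of Hausdorff distance. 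The symmetric case where the isolated vertex lies in $B_\lambda$ is identical with the roles of $R$ and $B$ swapped.

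I do not expect a genuine obstacle here; the statement is essentially a one-step triangle-inequality computation once the net property of the subscript notation is made explicit. The only mild subtlety is bookkeeping around the diagonal vertex when $R$ and $B$ are persistence diagrams: an isolated vertex could in principle be the added diagonal node, and one should note that the edge condition to the diagonal uses projection distance, so the same inequality applies with $\|v-w\|$ replaced by the projection distance, which again is an $L_\infty$ distance to a point on the diagonal and satisfies the triangle inequality. Since the lemma is stated for $R, B\subset \R^2$ without diagonals, I would present the clean Euclidean-subset version and remark that the diagram case follows verbatim.
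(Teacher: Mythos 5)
Your proof is correct and follows essentially the same route as the paper: the paper concludes from the isolated vertex that $\dist_H(R,B_\lambda)\ge \gamma\lambda$ and then applies the triangle inequality for Hausdorff distance together with $\dist_H(B,B_\lambda)\le\lambda$, which is exactly your point-level triangle-inequality argument packaged at the level of Hausdorff distances. Your extra remark about the diagonal vertex is a harmless addition not needed for the statement as given.
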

\begin{proof}
  Without loss of generality, let $x\in R_\lambda$ be an isolated vertex.
  Then, there are no points of $B_\lambda$ within distance $\lambda\gamma$ of $x$ and so, $\dist_H(R, B_\lambda) \ge \lambda\gamma$.
  Because $B_\lambda$ is a $\lambda$-net, $\dist_H(B, B_\lambda)\le \lambda$.
  Therefore, by the triangle inequality, $\dist_H(R,B)\ge \lambda\gamma - \lambda = \lambda(\gamma-1)$.
\end{proof}

\begin{theorem}\label{thm:compute_binbrhd}
  Let $R, B$ be PDs and let $\lambda$ and $\gamma$ be constants such that $\lambda(\gamma-1)\ge \dist_H(R,B)$.
  Given the greedy permutations of $R$ and $B$ as well as the corresponding $(2\gamma+1)$-filtered neighborhood graphs, the $\binbrhd(R_\lambda, B_\lambda, \gamma\lambda)$ can be computed in linear time.
\end{theorem}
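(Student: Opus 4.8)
The plan is to build $\binbrhd(R_\lambda, B_\lambda, \gamma\lambda)$ directly by a local search in the two precomputed filtered neighborhood graphs, rooted at the point $p_0$ of each greedy permutation (the point of maximum persistence), and to bound the work by a packing argument. First I would observe that, because $\lambda(\gamma-1) \ge \dist_H(R,B)$, Lemma~\ref{lem:isolated_vertex_hausdorff_bound} guarantees that $\binbrhd(R_\lambda, B_\lambda, \gamma\lambda)$ has no isolated vertices; equivalently, for every $r \in R_\lambda$ there is a $b \in B_\lambda$ with $\|r-b\| \le \gamma\lambda$, and symmetrically. This is the structural fact that makes a purely local traversal sufficient: we never have to ``jump'' between far-apart clusters of points, since every vertex of one net is anchored near the other net.

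The core step is the traversal itself. I would start from a seed edge---for instance, match $p_0^R$ (max persistence in $R_\lambda$) to its nearest neighbor in $B_\lambda$, which by the no-isolated-vertex property is within $\gamma\lambda$, or simply observe the maximum-persistence points of $R$ and $B$ are within $\dist_H$ of each other. From a known pair $(r,b)$ with $\|r-b\|$ small, any $b' \in B_\lambda$ with $\|r - b'\| \le \gamma\lambda$ satisfies $\|b - b'\| \le \gamma\lambda + \|r-b\| \le (2\gamma+1)\lambda$ once $\|r-b\|$ is controlled by the Hausdorff bound, hence $b'$ is a neighbor of $b$ in the $(2\gamma+1)$-filtered neighborhood graph of $B$ (using that $B_\lambda$ is a $\lambda$-net so the filtered graph at scale $\lambda$ contains $\nbrhd(B_\lambda, (2\gamma+1)\lambda)$). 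So to enumerate all cross-edges incident to $r$, I only need to walk the $B$-side filtered graph starting from any already-discovered partner of $r$. Running a BFS/DFS that alternates between the two filtered graphs---expanding the $B$-neighbors of each discovered $R$-vertex and vice versa, and recording an edge $(r,b)$ exactly when $\|r-b\| \le \gamma\lambda$---discovers every edge of $\binbrhd(R_\lambda, B_\lambda, \gamma\lambda)$.

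For the running time, I would argue that each vertex is processed $O(1)$ times and that at each vertex only $O(1)$ candidate edges are examined. The filtered neighborhood graphs have $(2(2\gamma+1)+1)^2 \cdot n = O(n)$ edges total (applying the degree bound for $\gamma$-filtered graphs with parameter $2\gamma+1$), so summing $O(1)$ work over all vertices and their incident filtered-graph edges gives $O(n)$ total. The key packing fact is the one already used in the paper: in $\R^2$ with the $L_\infty$ norm, a $\lambda$-separated set inside a square of side $c\lambda$ has at most $(c+1)^2$ points; applied to the net $R_\lambda$ it bounds the number of $B_\lambda$-points within $\gamma\lambda$ of a fixed $R$-point, and applied to $B_\lambda$ it bounds the filtered-graph degree. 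The diagonal is handled as the paper prescribes---a single extra vertex in each diagram with an edge to a point when that point's projection to the diagonal is within $\gamma\lambda$---which adds only $O(n)$ further edge-checks. The main obstacle I anticipate is making the ``start from a known pair and expand locally'' argument fully rigorous: one must verify that the alternating traversal is connected enough to reach \emph{every} vertex (not just those in one component), which is exactly where the no-isolated-vertex consequence of $\lambda(\gamma-1) \ge \dist_H(R,B)$ must be invoked carefully, together with the fact that each filtered neighborhood graph is itself connected (it contains a spanning structure rooted at $p_0$ because consecutive greedy prefixes are nets at geometrically decreasing scales). Once connectivity of the combined search is established, the packing bounds make the linear-time claim routine.
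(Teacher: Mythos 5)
There is a genuine gap, and it is exactly the one you flag at the end but do not resolve: your traversal works at the single, final scale $\lambda$, and at that scale neither of your two connectivity mechanisms gives you what the linear-time bound needs. The no-isolated-vertex consequence of $\lambda(\gamma-1)\ge\dist_H(R,B)$ guarantees that every vertex of $\binbrhd(R_\lambda,B_\lambda,\gamma\lambda)$ \emph{has} a partner, but it does not make that bipartite graph connected (two far-apart clusters, each containing nearby points of both $R$ and $B$, satisfy the Hausdorff hypothesis while the bipartite graph splits into two components), so a BFS that expands only through cross edges and one-hop filtered neighborhoods of known partners never leaves the seed's component. Your proposed repair---also walking same-side filtered edges, which do connect each prefix via the greedy parent edges---breaks the locality your packing argument relies on: filtered edges created early in the permutation have length up to $(2\gamma+1)\lambda_j$ with $\lambda_j\gg\lambda$, so after crossing such an edge from $r$ (partner $b$) to $r'$, the partners of $r'$ are no longer within $O(1)$ filtered hops of $b$, and the number of candidate partners is no longer bounded by a constant-size packing at scale $\lambda$. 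Your key local inequality (from a partner $b$ of $r$, every cross-neighbor $b'$ of $r$ satisfies $\|b-b'\|\le(2\gamma+1)\lambda$) is correct and mirrors the paper's triangle-inequality step, but it only helps once a partner at the \emph{current} scale is already in hand; it does not tell you how to find the first partner of a vertex reached through a long edge. Also, as a side remark, the alternative seed you mention is wrong: the maximum-persistence points of $R$ and $B$ need not be within $\dist_H(R,B)$ of each other (each is only close to \emph{some} point of the other diagram).

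The missing idea, which is how the paper's proof proceeds, is to work multi-scale rather than at the fixed scale $\lambda$: merge the two greedy permutations by insertion radius and incrementally maintain $G_i=\binbrhd(R_{\lambda_i},B_{\lambda_i},\gamma\lambda_i)$, shrinking the cross-edge threshold along with the current insertion radius. When $p_i$ is inserted, its nearest predecessor $y$ on the same side is at distance exactly $\lambda_i$, i.e.\ proportional to the current scale, and the isolated-vertex lemma (applied at scale $\lambda_i$, which is legitimate since $\lambda_i(\gamma-1)\ge\lambda(\gamma-1)\ge\dist_H(R,B)$) guarantees $y$ already has a cross partner $a$ in $G_i$; the triangle inequality then bounds every cross-neighbor of $p_i$ within $(2\gamma+1)\lambda_i$ of $a$, so it lies among $a$'s neighbors in the $(2\gamma+1)$-filtered graph at the current prefix, a constant-size candidate set. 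This scale matching (parent-edge length $\lambda_i$ versus cross threshold $\gamma\lambda_i$) is what simultaneously solves the cross-component problem and preserves the constant-work-per-insertion accounting; deleting over-long edges as scales shrink then yields the final graph at scale $\gamma\lambda$ in linear total time. Without this incremental, scale-by-scale maintenance, your argument as written does not establish either completeness of the output graph or the linear running time.
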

\begin{proof}
  The algorithm will be incremental, adding the points in order of their insertion radii.
  At each step $i$, we maintain $G_i = \binbrhd(R_{\lambda_i}, B_{\lambda_i}, \gamma\lambda_i)$, where $\lambda_i$ is the insertion radius of the newly inserted point.
  When inserting $p_i$, we add its neighbors $G_i$.
  Without loss of generality, assume $p_i\in R$.
  Let $y$ be the nearest neighbor of $p_i$ in $\nbrhd(R_i, 2\gamma(\lambda_i))$.
  So $\dist(y,p_i) = \lambda_i$.
  There are only a constant number of neighbors.
  Then, let $a$ be any neighbor of $y$ in $G_i$ (which are contained in the neighbors of $y$ in $G_{i-1}$).
  The neighbor $a$ must exist because of Lemma~\ref{lem:isolated_vertex_hausdorff_bound} and our choice of $\lambda$.
  By the triangle inequality,
  \[
    \dist(a,b)\le \dist(a,y) + \dist(y,x) + \dist(x, b) \le \lambda_i\gamma + \lambda_i + \lambda_i\gamma = \lambda_i(2\gamma+1).
  \]
  It follows that $a$ and $b$ are neighbors in $\nbrhd(B_{\lambda_i}, \lambda_i(2\gamma+1))$.
  So, the neighbors of $p_i$ can all be found among the neighbors of $y$.
  Because the degrees are constant and edges that are too long for the current value of $\lambda_i$ can be deleted as they are encountered the neighbors of $p_i$ can be found in amortized constant time.

  One pass over the edges suffices to remove any that are longer than $\lambda\gamma$.
  Thus, the total running time is linear.
\end{proof}

\begin{theorem}\label{thm:hausdorff_approximation}
Given the greedy permutations of $R$ and $B$ as well as the corresponding $(2\gamma+1)$-filtered neighborhood graphs, an approximation of $\dist_H(R,B)$ to within a factor of $1\pm \frac{1}{\gamma}$ can be computed in linear time.
\end{theorem}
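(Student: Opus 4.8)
The plan is to turn Theorem~\ref{thm:compute_binbrhd} into an estimate of $\dist_H(R,B)$ by a binary (or direct) search over the scale parameter $\lambda$, using the filtered neighborhood graphs to test, at each candidate scale, whether the bipartite neighborhood graph $\binbrhd(R_\lambda, B_\lambda, \gamma\lambda)$ has an isolated vertex. Recall from the remarks preceding Lemma~\ref{lem:isolated_vertex_hausdorff_bound} that $\dist_H(P,Q)$ is exactly the smallest $\alpha$ for which $\binbrhd(P,Q,\alpha)$ has no isolated vertex; the point here is that we never build the full bipartite graph but instead work with the $O(n)$-size filtered graphs already in hand.

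\paragraph*{The two-sided bound.} The key is to sandwich $\dist_H(R,B)$ between two quantities computable in linear time. On one side, Lemma~\ref{lem:isolated_vertex_hausdorff_bound} says that if $\binbrhd(R_\lambda, B_\lambda, \gamma\lambda)$ has an isolated vertex then $\dist_H(R,B) \ge \lambda(\gamma - 1)$. On the other side, I would prove the complementary statement: if $\binbrhd(R_\lambda, B_\lambda, \gamma\lambda)$ has \emph{no} isolated vertex, then $\dist_H(R,B) \le \lambda(\gamma+1)$. This follows by the same triangle-inequality argument in reverse — every point of $R$ is within $\lambda$ of $R_\lambda$ (the net property), every point of $R_\lambda$ is within $\gamma\lambda$ of $B_\lambda$ (no isolated vertex), and every point of $B_\lambda$ is within $\lambda$ of $B$, giving $\dist_H(R,B) \le \lambda\gamma + \lambda = \lambda(\gamma+1)$ after checking both directions symmetrically. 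So letting $\lambda^\star$ be the threshold scale at which the isolated-vertex test flips, we get
\[
  \lambda^\star(\gamma - 1) \le \dist_H(R,B) \le \lambda^\star(\gamma+1),
\]
and reporting $\gamma\lambda^\star$ as the estimate gives relative error $1 \pm \frac{1}{\gamma}$ as claimed.

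\paragraph*{Making it linear time.} The candidate scales are the insertion radii from the two greedy permutations, which come pre-sorted; the isolated-vertex property is monotone in $\lambda$, so in principle one binary-searches among $O(n)$ candidate scales. The subtlety is that each test via Theorem~\ref{thm:compute_binbrhd} is itself linear, which would give $O(n \log n)$. To get genuine linear time I would instead run the incremental construction of Theorem~\ref{thm:compute_binbrhd} \emph{once}, processing insertions in decreasing order of radius and maintaining, as a by-product, the current set of isolated vertices: a point becomes non-isolated the moment an edge of length $\le \gamma\lambda$ is added incident to it, and edges that grow too long relative to the current $\lambda$ are retired. The smallest $\lambda$ at which the isolated-vertex count is still zero is read off in a single pass. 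The main obstacle, and the step I would be most careful about, is exactly this amortization — ensuring that as $\lambda$ decreases the bookkeeping of which bipartite edges are currently "active" (length $\le \gamma\lambda$) and which vertices are currently isolated can be maintained with $O(1)$ amortized work per insertion, piggybacking on the constant-degree guarantee and the edge-retirement pass already established in the proof of Theorem~\ref{thm:compute_binbrhd}. Given that, the correctness of the reported value follows immediately from the sandwich above.
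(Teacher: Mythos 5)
Your proposal is correct and follows essentially the same route as the paper: run the incremental construction of Theorem~\ref{thm:compute_binbrhd} once, identify the threshold scale $\lambda^\star$ at which a vertex first becomes isolated, report $\gamma\lambda^\star$, and sandwich $\dist_H(R,B)$ between $\lambda^\star(\gamma-1)$ (Lemma~\ref{lem:isolated_vertex_hausdorff_bound}) and $\lambda^\star(\gamma+1)$ (triangle inequality with the net property). The only cosmetic difference is bookkeeping: the paper finds the threshold by a post-pass over the linearly many edges (taking, for each vertex, the distance to its nearest adjacent vertex and then the maximum), whereas you maintain the isolated-vertex set online during the single run, which is equivalent given the constant-degree guarantee.
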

\begin{proof}
  If the algorithm from Theorem~\ref{thm:compute_binbrhd} is run until it either inserts all the points or discovers an isolated vertex, it will produce a graph with a linear number of edges.
  By iterating over the edges, one can find, for each vertex, the distance to the nearest adjacent vertex in the graph.
  The maximum of these distances indicates the distance $r = \lambda\gamma$ at which a vertex becomes isolated and is the desired approximation.
  By Lemma~\ref{lem:isolated_vertex_hausdorff_bound}, we know that $\dist_H(R,B)\ge \lambda(\gamma-1) = r(1-\frac{1}{\gamma})$.
  Similarly, because $\dist_H(R, B_\lambda)\le \lambda$, the triangle inequality implies that $\dist_H(R,B) \le \dist_H(R, B\lambda) + \lambda = \lambda(\gamma + 1) = r(1 + \frac{1}{\gamma})$.
\end{proof}

  \section{Conclusions and Future Work}
\label{sec:conclusion}

We have presented an efficient method to preprocess a persistence diagram so that one can extract guaranteed approximations with any number of distinct points.
It remains to explore the relationship between greedy PD sketches and other PD simplicifcations such as persistence landscapes~\cite{bubenik15statistical} or persistence images~\cite{adams17persistent}.
In particular, it is possible to construct a persistence landscape or a persistence image from a PD sketch, possibly much faster than with the entire diagram.
Another application where many PD distance computations are used is in the computation of Wasserstein barycenters.
In Vidal et al.~\cite{vidal19progressive}, a kind of sketch is used to dramatically speed up the Wasserstein barycenter computation.
In that case, they use the subset of points of highest persistence.
Although the approximation guarantees we prove are only applicable to the bottleneck distance, it seems reasonable that they should also be applicable to other Wasserstein metrics.
Indeed, we give the matching update for these metrics in Section~\ref{sec:matching_update}.

In future work, we will incorporate these sketches into a data structure that supports standard proximity search queries including (approximate) nearest neighbor search, metric range search, and metric range counting.
This is the main target of our work as it is a case where there is immediate benefit to finding fast approximate distances with bounds on the error to prune a search.


  \bibliographystyle{abbrv}
  \bibliography{bibliography}

  \appendix
  \section{Neighbor Analysis}
\label{sec:neighbor_analysis}

One of the key insights in the $sb$ data structure was the maintenance of a neighborhood graph with an edge between any pair of points $(a,b)$ for which the addition of a point the Voronoi cell of $a$ would impact the Voronoi cell of $b$.
Rather than maintaining this graph exactly, Clarkson showed that it sufficed to keep edges for which the distance is at most three times the larger of the radii of the cell.
With our modified distances and radii, we will need to extend this graph to four times the larger of the radii.
Below, we show that this is sufficient.

Note that for a diagonal point $\hat{a}$ and a point $x\in \vor(\hat{a})$, we have
\[
  \rad(\hat{a})\ge \max\{\|\hat{a}- \hat{x}\|, \|x- \hat{x}\|\}.
\]

Let $\hat{a}\in \hat{X}$ and $b\in X$ be inserted points.
Then, we need $\hat{a}$ and $b$ to be adjacent if there is $x\in \vor(\hat{a})$ and $y\in \vor(b)$ such that $\|x-y\|\le \max\{r(\hat{a},x), r(b,y)\}$.
Suppose this condition holds, then it follows that
\begin{align*}
  \|\hat{a}-b\|
    &\le \|\hat{a} - \hat{x}\| + \|x - \hat{x}\| + \|x-y\| + \|y-b\|\\
    &\le 2\rad(\hat{a}) + \max\{r(\hat{a},x), r(b,y)\} + \rad(b)\\
    &\le 2\rad(\hat{a}) + \rad(b) + \max\{\rad(\hat{a}), \rad(b)\}.
\end{align*}

For points $a\in X$ and $b\in X$, the situation is a little simpler.
Here we have $a$ and $b$ adjacent if there is $x\in \vor(a)$ and $y\in \vor(b)$ such that $\|x-y\|\le \max\{\|a-x\|, \|b-y\|\}$.
This is the case covered by Clarkson.
\begin{align*}
  \|a - b\|
    &\le \|a - x\| + \|x-y\| + \|y-b\|\\
    &\le \rad(a) + \rad(b) + \max\{\rad(a), \rad(b)\}.
\end{align*}

The last case to consider is when $\hat{a}, \hat{b}\in \hat{X}$.
In this case, we use the fact that $\|\hat{x} - \hat{y}\| \le \|x-y\|$.
\begin{align*}
  \|\hat{a} - \hat{b}\|
    &\le \|\hat{a} - \hat{x}\| + \|\hat{x}-\hat{y}\| + \|\hat{y}-\hat{b}\|\\
    &\le \|\hat{a} - \hat{x}\| + \|x-y\| + \|\hat{y}-\hat{b}\|\\
    &\le \rad(a) + \rad(b) + \max\{\rad(a), \rad(b)\}.
\end{align*}

\end{document}